\newcommand{\poly}{\mathrm{poly}}
\newcommand{\defeq}{\stackrel{\mbox{\scriptsize{\normalfont\rmfamily def. }}}{=}}
\newcommand{\shortqed}{\hfill \mbox{$\blacksquare$} \smallskip}
\newcommand{\Order}{\mathrm{O}}
\newcommand{\Ni}{{\cal N}}
\newcommand{\No}{{\cal N}}
\newcommand{\N}{{\cal N}}
\newcommand{\dtv}{{\cal D}_{\rm tv}}
\newcommand{\CT}{T_{cover}}
\newcommand{\X}{X}
\newcommand{\nv}{X}
\newcommand{\M}{M}
\newcommand{\ti}{t'}
\newcommand{\tn}{k}
\newtheorem{theorem}{Theorem}[section]
\newtheorem{lemma}[theorem]{Lemma}
\newtheorem{corollary}[theorem]{Corollary}
\newtheorem{proposition}[theorem]{Proposition}
\author[Takeharu Shiraga]{Takeharu Shiraga\addressmark{1}\thanks{Supported by JSPS KAKENHI Grant Number 15J03840.}}
\title[The Cover Time of Deterministic Random Walks for General Transition Probabilities]{The Cover Time of Deterministic Random Walks for General Transition Probabilities}
\address{\addressmark{1}Graduate School of Information Science and Electrical Engineering, Kyushu University, Fukuoka, Japan\protect \\ 
 {\ttfamily takeharu.shiraga@inf.kyushu-u.ac.jp}}
\begin{document}
\maketitle
\begin{abstract}
\paragraph{Abstract.}
The {\em deterministic random walk} is a deterministic process
analogous to a random walk.
While there are some results on the cover time of the {\em rotor-router} model,
which is a deterministic random walk corresponding to a simple random walk,
nothing is known about the cover time of deterministic random walks
emulating general transition probabilities.
This paper is concerned with the {\em SRT-router} model with multiple tokens,
which is a deterministic process coping with general transition
probabilities possibly containing irrational numbers.
For the model, we give an upper bound of the cover time,
which is the first result on the cover time of deterministic random
walks for general transition probabilities.
Our upper bound also improves the existing bounds for the
rotor-router model in some cases.
%
%\smallskip
%\noindent
%{\bf Key words}: 
\end{abstract}
\keywords{rotor router model, stack walk, multiple random walk, mixing time, cover time}
%%% Please eliminate the following two lines from your file.
%\clearpage
%\tableofcontents

%%%%%%%%%%%%%%%%%%%%%%%%%%%%%%%%%%%%%%%%%%%%%%%%%%%%%%%%%%%%%%
\section{Introduction}
\label{sec:in}
%%%%%%%%%%%%%%%%%%%%%%%%%%%%%%%%%%%%%%%%%%%%%%%%%%%%%%%%%%%%%%
%%%%%%%%%%%%%%%%%%%%%%%%%%%%%%%%%%%%%%%%%%%%%%%%%%%%%%
\paragraph{Previous works for the cover time of random walks}%
%%%%%%%%%%%%%%%%%%%%%%%%%%%%%%%%%%%%%%%%%%%%%%%%%%%%%%
A {\em random walk} is a fundamental stochastic process on a graph, in which a token successively transits to neighboring vertices chosen at random.
The expected cover time (this paper simply says {\em cover time}) of a random walk on a finite graph is the expected time until every vertex has been visited by the token.
The cover time is a fundamental measure of a random walk, and it has been well investigated.

%Network exploration by a {\em random walk} is well studied since its simplicity, locality and robustness to dynamic networks. 
%A {\em simple random walk} is a stochastic process on a graph. In this process, a token chooses a neighboring vertex uniform at random and moves to it repeatedly. 
%Then, the problem is how fast a random walk explores every vertex of the graph. 

Aleliunas et al.~\cite{AKLL79} showed that the cover time of a {\em simple random walk}, in which a neighboring vertex is chosen uniformly at random, 
is upper bounded by $2m(n-1)$ for any connected graph, where $m$ denotes the number of edges and $n$ denotes the number of vertices. 
%The cover time of a random walk is the {\em expected} time until every vertex has been visited by a token. 
Feige~\cite{F951,F952} showed that the cover time is lower bounded by $\bigl(1-o(1)\bigr)n\log n$ and upper bounded by $\bigl(1+o(1)\bigr)(4/27)n^3$ for any graph. 

%Using $\tn >1$ tokens is considered to be an approach to speed up the cover time. 
Motivated by a faster cover time, the cover time by more than one token has also been investigated. 
Broder et al.~\cite{BKRU94} gave an upper bound of the cover time of $k$ independent parallel simple random walks ($k$-simple random walks) when tokens start from stationary distribution.
For an arbitrary initial configuration of tokens, Alon et al.~\cite{Alon11} showed that the cover time of $k$-simple random walks is upper bounded by $\bigl(({\rm e}+o(1))/k\bigr)t_{{\rm hit}}\log n$ for any graph if $k\leq \log n$, 
where ${\rm e}$ is Napier's constant and $t_{{\rm hit}}$ denotes the (maximum) {\em hitting time}.
Elsasser and Sauerwald~\cite{ES11} gave an better upper bound for large $\tn$ of $\Order\bigl(t^* +(t_{{\rm hit}}\log n)/\tn\bigr)$ for any graph if $\tn \leq n$, where $t^*$ is the {\em mixing time}.  

Ikeda et al.~\cite{IKY09} took another approach for speeding up, which uses {\em general} transition probabilities (beyond simple random walks). % to speed up.
They invented {\em $\beta$-random walk}, consisting of irrational transition probabilities in general, and showed that the cover time is $\Order(n^2\log n)$. 
%They archived $\Order(n^2 \log n)$ for any $G$ by the {\em $\beta$-random walk} which uses transition probabilities possibly containing irrational numbers.
Nonaka et al.~\cite{NOSY10} showed that the cover time of a {\em Metropolis-walk}, which is based on the {\em Metropolis-Hastings algorithm}, is $\Order(n^2 \log n)$ for any graph. 

Little is known about the cover time by multiple tokens with general transition probabilities. 
Elsasser and Sauerwald~\cite{ES11} gave a general lower bound of $\Omega \bigl((n\log n)/\tn \bigr)$ for any transition probabilities and for any $n^\varepsilon \leq \tn \leq n$, where $0<\varepsilon <1$ is a constant. 
%%%%%%%%%%%%%%%%%%%%%%%%%%%%%%%%%%%%%%%%%%%%%%%%%%%%%%%%%%%%%%%%%
\paragraph{Previous works for the cover time of deterministic random walks}%
%%%%%%%%%%%%%%%%%%%%%%%%%%%%%%%%%%%%%%%%%%%%%%%%%%%%%%%%%%%%%%%%%
From the view point of the {\em deterministic} graph exploration, the {\em rotor-router model}, which is a deterministic process analogous to a simple random walk, is well studied recently.
In this model, each vertex $u$ sends tokens %on $u$
 one by one to neighboring vertices in the round robin fashion, 
i.e., $u$ serves tokens to a neighboring vertex $v$ with a ratio about $1/\delta(u)$, where $\delta(u)$ is the number of neighbors.

Yanovski et al.~\cite{YWB03} studied the asymptotic behavior of the rotor-router model, and proved that any rotor-router model always stabilizes to a traversal of an Eulerian cycle after $2mD$ steps at most, where $D$ denotes the diameter of the graph.
Bampas et al.~\cite{BGHI09} gave examples of which the stabilization time gets ${\rm \Omega }(mD)$. 
Their results imply that the cover time of a single token version of a rotor-router model is $\Theta (mD)$ in general.
Another approach to examine the cover time of the rotor-router model is connecting qualities of a random walk and the {\em visit frequency} $\X_v^{(T)}$ of the rotor-router model, 
where $\nv^{(T)}_v$ denotes the total number of times that tokens visited vertex $v$ by time $T$. 
%of vertex $v$ has been visited by tokens before time $T$. 
Holroyd and Propp~\cite{HP10} showed that $|\pi_v-\nv^{(T)}_v/T|\leq K\pi_v/T$, where $K$ is an constant independent of $T$, and $\pi$ is the stationary distribution of the corresponding random walk. 
This theorem says that $\nv^{(T)}_v/T$ converges to $\pi_v$ as $T$ increasing. 
Using this fact, Friedrich and Sauerwald~\cite{FS10} gave upper bounds of the cover time for many classes of graphs. 

To speed up the cover time, the rotor-router model with $\tn >1$ tokens is studied by Dereniowski et al.~\cite{DKPU14}.
They gave an upper bound $\Order (mD/\log \tn)$
%$\Order \bigl(\max\{mD/\log \tn, mD/\log n\}\bigr)$
 for any graph when $\tn =\Order \bigl(\poly(n)\bigr)$ or $2^{\Order(D)}$, and also gave an example of ${\rm \Omega}(mD/\tn)$ as a lower bound. 
Kosowski and Pajak~\cite{KP14} gave a modified upper bound of the cover time for many graph classes by connecting $\nv^{(T)}_v$ and the corresponding simple random walk. 
They showed that the upper bound is $\Order \bigl(t^*+(\Delta/\delta)(mt^*/\tn) \bigr)$ for general graphs, where $\Delta/\delta$ is the maximum/minimum degree. 

%It is probable that generalizing the rotor-router model to {\em deterministic random walk}, which means the deterministic process analogous to general transition probabilities beyond simple random walks, speeds up the cover time like $\beta$-random walk~\cite{IKY09} or Metropolis walk~\cite{NOSY10}. 
Beyond the rotor-router model, which corresponds to a simple random walk, the {\em deterministic random walk} for general transition probabilities has been invented,
that is each vertex $u$ deterministically serves tokens on $u$ to a neighboring vertex $v$ with a ratio about $P_{u,v}$, where $P_{u,v}$ denotes the transition probability from $u$ to $v$ of a corresponding random walk (See Section~\ref{subsec:SRT} for the details).  
%
%For such deterministic processes, the {\em stack walk}~\cite{HP10} and the {\em SRT-router model}~\cite{SYKY13} were proposed respectively.
Holroyd and Propp~\cite{HP10} provides the {\em stack walk} (Shiraga et al.~\cite{SYKY13} called it {\em SRT-router model}), 
and showed a connection between the visit frequency and hitting probabilities. 
Shiraga et al.~\cite{SYKY13} investigated functional-router model, which is a more general framework, and gave an analysis on the {\em single vertex discrepancy} between the SRT-router model and its corresponding random walk. 
As far as we know, nothing is known about the cover time of deterministic random walks for general transition probabilities.
%%%%%%%%%%%%%%%%%%%%%
\paragraph{Our results}%
%%%%%%%%%%%%%%%%%%%%%
This paper is concerned with the cover time of the deterministic random walk according to general transition probabilities with $k$ tokens, while previous results studied the rotor-router model (corresponding to simple transition probabilities). 
We give an upper bound of the cover time for any SRT-router model imitating any ergodic and reversible transition matrix possibly containing irrational numbers (Theorem~\ref{thm:coverSRT}). 
Precisely, the upper bound is $\Order\bigl(t^*+m't^*/\tn \bigr)$ for any number of tokens $\tn \geq 1$, where $m'=\max_{u\in V}(\delta(u)/\pi_u)$. 
This is the first result of an upper bound of the cover time for deterministic random walks imitating general transition probabilities, as far as we know. 
Theorem~\ref{thm:coverSRT} implies that the upper bound of the cover time of the rotor-router model is $\Order\bigl(t^*+mt^*/\tn \bigr)$ for any graph (Corollary~\ref{cor:coverRR}). 
For $k=1$, this bound matches to the existing bound $\Order (mD)$ by \cite{YWB03} when $t^*=\Order (D)$.  
This bound is better than 
%$\Order \bigl(\max\{mD/\log \tn, mD/\log n\}\bigr)$
$\Order (mD/\log \tn)$ by \cite{DKPU14} when $t^*$ is small or $k$ is large. 
Our bound also improves the bound $\Order \bigl(t^*+(\Delta/\delta)(mt^*/\tn) \bigr)$ by \cite{KP14} in $\Delta/\delta$ factor for inhomogeneous graphs.

In our proof, we investigate the connection between the visit frequency $\X_v^{(T)}$ of the SRT-router model and the corresponding multiple random walks with general transition probabilities. 
This approach is an extension of \cite{HP10, FS10, KP14}. 
In precise, we show that $|\pi_v-(\nv^{(T)}_v/\tn T)|<K\pi_v/T$ holds for any reversible and ergodic transition matrices, where $\pi_v$ is the stationary distribution of the corresponding transition matrix and $K$ is constant independent of $T$. 
This upper bound extends the result of \cite{HP10} to $\tn>1$ tokens and general transition probabilities. 
%%%%%%%%%%%%%%%%%%%%%%%%%%%%%%%%%%%%%%%%%%%%%%%%%
\paragraph{Related topics for deterministic random walks}%
%%%%%%%%%%%%%%%%%%%%%%%%%%%%%%%%%%%%%%%%%%%%%%%%%
As a highly related topic, there are several results on the single vertex discrepancy between a configuration of tokens of a multiple deterministic random walk and an expected configuration of tokens of its corresponding random walk.
Rabani et al.~\cite{RSW98} gave an upper bound of the single vertex discrepancy of the {\em diffusive model}, and gave the framework of the analysis.
The single vertex discrepancy on several basic structures were widely studied,  
e.g., 
constant upper bound for the lattice~\cite{CS06, CDST07, DF09}, 
lower bound for the tree~\cite{CDFS10}, 
$d$-dimensional hyper cube~\cite{FGS12, AB13}, etc. 
Berenbrink et al.~\cite{BKKM15} gave a sophisticated upper bound on $d$-regular graphs.
To cope with general rational transition probabilities, rotor-router model on multidigraphs is studied in \cite{KKM12,KKM13}.
The SRT-router model is investigated in \cite{SYKY13, SYKY16}. They examined the discrepancy between this model and general Markov chains under natural assumptions. 
Recently, Chalopin et al.~\cite{CDGK15} gave the upper and lower bound of the stabilization time for the rotor-router model with many tokens.

%%%%%%%%%%%%%%%%%%%%%%%%%%%%%%%%%%%%%%%%%%%
\section{Preliminaries}\label{sec:model}%
%%%%%%%%%%%%%%%%%%%%%%%%%%%%%%%%%%%%%%%%%%%
%To state our results, we introduce precise definitions of random walks and deterministic random walks. 
%%%%%%%%%%%%%%%%%%%%%%%%%%%%%%%%%%%%
\subsection{Random walk / Markov chain}%
%%%%%%%%%%%%%%%%%%%%%%%%%%%%%%%%%%%%
Let $V=\{1,2,\ldots, n\}$ be a finite state set, and let $P\in \mathbb{R}^{n\times n}_{\geq 0}$ be a transition matrix on $V$. 
$P$ satisfies $\sum_{v\in V}P_{u,v}=1$ for any $u\in V$, where $P_{u,v}$ denotes the $(u,v)$-entry of $P$.
It is well known that any {\rm ergodic}\footnote{$P$ is ergodic if $P$ is {\rm irreducible} ($\forall u,v\in V, \exists t>0, P^t_{u,v}>0$) and {\rm aperiodic} ($\forall v\in V, {\rm GCD}\{t\in \mathbb{Z}_{>0}\mid P^t_{v,v}>0\}=1$).} $P$ has a unique {\em stationary distribution} $\pi\in \mathbb{R}^n_{>0}$ (i.e., $\pi P=\pi$), 
and the limit distribution is $\pi$ (i.e., $\lim_{t\to \infty}\xi P=\pi$ for any probability distribution $\xi$ on $V$). 
To discuss the {\em convergence} formally, we introduce the {\em total variation distance} and the {\em mixing time}.  
Let $\xi$ and $\zeta$ be probability distributions on $V$, then the total variation distance $\dtv$ between $\xi$ and $\zeta$ is defined by 
\begin{eqnarray}
\label{def:TV}
\dtv(\xi, \zeta) 
&\defeq &\max_{A\subseteq V} \left| \sum_{v\in A}(\xi_v-\zeta_v )\right|  
=\frac{1}{2} \left\|\xi-\zeta\right\|_1
=\frac{1}{2} \sum_{v\in V}|\xi_v-\zeta_v|. 
\end{eqnarray}
The {\em mixing time} of $P$ is defined by\footnote{$P^t_{u,v}$ denotes the $(u,v)$ entry of $P^t$, and $P^t_{u, \cdot}$ denotes the $u$-th row vector of $P^t$. } 
\begin{eqnarray}
\label{def:mix}
\tau(\varepsilon) \defeq 
\max_{u \in V} \min \left\{ t \in \mathbb{Z}_{\geq 0} \mid \dtv(P^t_{u, \cdot}, \pi) \leq \varepsilon \right\}
\end{eqnarray}
for $\varepsilon > 0$, and
\begin{eqnarray}
t^* \defeq \tau(1/4), 
\end{eqnarray}
which is often used as an important characterization of $P$ (cf.\cite{LPW08}). 

In this paper, we assume $P$ is ergodic and {\em reversible}. 
We call a $P$ is reversible if $\pi_uP_{u,v}=\pi_vP_{v,u}$ holds for any $u,v\in V$.
For example, transition matrices of the $\beta$-random walk~\cite{IKY09} and the Metropolis walk~\cite{NOSY10} are both reversible. 
\paragraph{Notations of multiple random walks}
Let $\mu^{(0)}=(\mu^{(0)}_1,\ldots,\mu^{(0)}_n)\in \mathbb{Z}^n_{\geq 0}$ denote an initial configuration of $\tn$ tokens over $V$. 
At each time step $t\in \mathbb{Z}_{\geq 0}$, each token on $v\in V$ moves independently to $u\in V$ with probability $P_{u,v}$. 
Let $\mu^{(t)}=(\mu^{(t)}_1,\ldots,\mu^{(t)}_n)\in \mathbb{R}^n_{\geq 0}$ denote the {\em expected} configuration of tokens at time $t\in \mathbb{Z}_{\geq 0}$: then $\mu^{(t)}=\mu^{(0)}P^t$ holds\footnote{In this paper, $(\mu^{(0)}P^t)_v$ denotes the $v$-th element of the vector $\mu^{(0)}P^t$, i.e., $(\mu^{(0)}P^t)_v=\sum_{u\in V}\mu^{(0)}_uP^t_{u,v}$. }. 
Note that the definitions of the mixing times say that $\dtv(\mu^{(t)}/\tn, \pi)\leq \varepsilon$ after $t\geq \tau (\varepsilon )$. 
%
%%%%%%%%%%%%%%%%%%%%%%%%%%%%%%%%%%%%%%%%%%%
\subsection{SRT-router model}\label{subsec:SRT}%
%%%%%%%%%%%%%%%%%%%%%%%%%%%%%%%%%%%%%%%%%%%
To imitate random walks with general transition probabilities possibly containing irrational numbers, 
the deterministic process based on {\em low-discrepancy sequences} (cf. \cite{AJJ10, T80}) were proposed, called {\em stack walk} in \cite{HP10} and {\em SRT-router model} in \cite{SYKY13}.
%Intuisively, low discrepancy sequence is.. Proposition~\ref{obs:discZ}. 
In this section, we describe the definition of this model. 

Let $\N(v)$ denote the (out-)neighborhood\footnote{If $P$ is reversible, $u\in \N(v)$ if and only if $v\in \N(u)$, and then we abuse $\N(v)$ for in-neighborhood of $v\in V$} of $v$, i.e., $\N(v)=\{u\in V\mid P_{v,u}>0\}$. 
In this model, $\tn$ tokens move according to {\em SRT-router} $\sigma_v:\mathbb{Z}_{\geq 0}\to \No(v)$ defined on each $v\in V$ for a given $P$. 
Given $\sigma_v(0),\ldots,\sigma_v(i-1)$, inductively $\sigma_v(i)$ is defined as follows. 
First, let
\begin{eqnarray*}
T_i(v)=\{u\in \No(v) \mid |\{ j\in [0,i) \mid \sigma_v(j) =u\}|  -(i+1)P_{v,u}<0\}, 
\end{eqnarray*}
where $[z,z')= \{z,z+1,\ldots,z'-1\}$ (and we remark $[z,z)=\emptyset$). 
Then, let $\sigma_v(i)$ be $u^*\in T_i(v)$ minimizing the value
\begin{eqnarray*}
%\label{eq:SRTrule}
\frac{\bigl|\{ j\in [0,i) \mid \sigma_v(j) =u\}\bigr| +1}{P_{v,u}}
\end{eqnarray*}
over choices $u\in T_i(v)$. If there are two or more such $u\in T_i(v)$, then let $u^*$ be the minimum in them in an arbitrary prescribed order. 
Then, the sequence $\sigma_v(0), \sigma_v(1), \ldots$ satisfies the following {\em low-discrepancy property} for any $v$ and $P$ (cf.~\cite{AJJ10, T80}). 
\begin{proposition}
\label{prop:SRT-disc}
\cite{AJJ10, T80}
For any $P$, 
\begin{eqnarray*}
\Bigl| \bigl| \{ j\in [0,z) \mid \sigma_v(j) =u\} \bigr| - z\cdotp P_{v,u} \Bigr| <1
\end{eqnarray*}
holds for any $v,u\in V$ and for any integer $z>0$. 
\end{proposition}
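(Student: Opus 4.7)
My plan is to induct on $z$, tracking the offset $\phi_z(u) := N_z(u) - z p_u$, where $N_z(u) := |\{j \in [0,z) : \sigma_v(j) = u\}|$ and $p_u := P_{v,u}$; the goal is $|\phi_z(u)| < 1$ for every $z \ge 1$ and every $u \in \N(v)$. The base case $z=0$ gives $\phi_0 \equiv 0$. For the inductive step, writing $u^* := \sigma_v(z)$, the one-step update is $\phi_{z+1}(u^*) = \phi_z(u^*) + 1 - p_{u^*}$ and $\phi_{z+1}(u) = \phi_z(u) - p_u$ for $u \ne u^*$.

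Three of the four required inequalities drop out directly. For $u \ne u^*$ the upper bound is inherited, $\phi_{z+1}(u) < \phi_z(u) < 1$. For the chosen $u^*$ the upper bound $\phi_{z+1}(u^*) < 1$ is equivalent to $N_z(u^*) < (z+1) p_{u^*}$, which is exactly the membership condition $u^* \in T_z$ imposed by the algorithm; the lower bound $\phi_{z+1}(u^*) > -1$ follows from $\phi_z(u^*) > -1$ and $p_{u^*} \le 1$.

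The only substantive case, and the main obstacle, is the lower bound for non-selected tokens: for $u \ne u^*$ I must show $\phi_{z+1}(u) > -1$, equivalently $(N_z(u)+1)/p_u > z+1$. If $u \notin T_z$ this is immediate from $N_z(u) \ge (z+1)p_u$. For $u \in T_z \setminus \{u^*\}$ I would invoke the greedy minimization $(N_z(u^*)+1)/p_{u^*} \le (N_z(u)+1)/p_u$, which reduces the task to showing that the left side strictly exceeds $z+1$. I expect this to require carrying a slight strengthening of the inductive hypothesis---namely, that every $u \in T_z$ satisfies $(N_z(u)+1)/p_u > z+1$---together with the observation that the prescribed tiebreak order on ties $(N_z(u)+1)/p_u = (N_z(u')+1)/p_{u'}$ cannot leave two distinct indices of $T_z$ simultaneously saturated at value exactly $z+1$.

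This tie-bookkeeping is the delicate point; with irrational transition probabilities ties effectively never arise, but for the strict bound $<1$ to hold uniformly one must rule out pathological coincidences via the fixed tiebreak rule and the integrality of $N_z$. Beyond this the argument is routine, and the overall statement is the classical low-discrepancy property of Tijdeman's scheduling sequence~\cite{T80}, here recast in the SRT-router language of the present paper.
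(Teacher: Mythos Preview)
The paper gives no proof of this proposition; it is quoted as a known result from \cite{AJJ10, T80} and then used as a black box (the next proof in the paper, that of Proposition~\ref{obs:discZ}, simply invokes it). There is therefore nothing in the paper to compare your argument against—you are supplying strictly more than the paper does.

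On the merits, your inductive framework is the standard one, and the three ``easy'' inequalities are handled correctly. You are also right that the lower bound for $u\in T_z\setminus\{u^*\}$ is the only substantive case. However, your proposed resolution of that case has a real gap. The reduction you suggest—prove $(N_z(u^*)+1)/p_{u^*}>z+1$ and transport strictness via the greedy inequality—does not go through: whenever $(z+1)p_{u^*}$ is an integer and $N_z(u^*)=(z+1)p_{u^*}-1$ (perfectly consistent with $u^*\in T_z$), the minimum value equals $z+1$ exactly, so the strict inequality on the $u^*$ side fails. Your fallback ``strengthened inductive hypothesis'' that every $u\in T_z$ satisfies $(N_z(u)+1)/p_u>z+1$ already fails for $p=(1/2,1/2)$ at $z=1$, where the unique element of $T_1$ sits at value exactly $z+1=2$. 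And since the paper's tiebreak rule is an \emph{arbitrary} prescribed order, invoking it cannot by itself rule out two members of $T_z$ landing simultaneously at the critical value $z+1$; what is needed is an argument that this configuration is unreachable by the process. Tijdeman's original argument (and the treatment in \cite{AJJ10}) supplies exactly that missing piece via a sharper invariant; your sketch stops just short of it.
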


Let $\chi^{(0)}=\mu^{(0)}$ and $\chi^{(t)}\in \mathbb{Z}^n_{\geq 0}$ denote the configuration of $\tn$ tokens at time $t\in \mathbb{Z}_{\geq 0}$ in a SRT-router model ($\sum_{v\in V}\chi^{(t)}_v=\tn$). 
Then, SRT-router model works as follows.
At first time step ($t=0$), there are $\chi_v^{(0)}$ tokens on vertex $v$, and each $v$ serves tokens to neighbors according to $\sigma_v(0),\sigma_v(1),\ldots,\sigma_v(\chi_v^{(0)}-1)$. 
In other words, $|\{j\in [0,\chi_v^{(0)})\mid \sigma_v(j)=u\}|$ tokens move from $v$ to $u$, and $\chi^{(1)}_u=\sum_{v\in V}|\{j\in [0,\chi_v^{(0)})\mid \sigma_v(j)=u\}|$. 
Next time step ($t=1$), there are $\chi_v^{(1)}$ tokens on vertex $v$, and each $v$ serves tokens to neighbors according to $\sigma_v(\chi_v^{(0)}),\sigma_v(\chi_v^{(0)}+1),\ldots,\sigma_v(\chi_v^{(0)}+\chi_v^{(1)}-1)$, and $\chi^{(2)}$ is defined in a similar way.
In general, let $Z_{v,u}^{(t)}$ denote the number of tokens moving from $v$ to $u$ at time $t$. Then $Z_{v,u}^{(t)}$ is defined as 
\begin{eqnarray}
\label{def:Z}
Z_{v,u}^{(t)}=\left|\left\{ j\in [0,\chi_v^{(t)}) \mid \sigma_v(\X_v^{(t)}+j)=u \right\}\right|, 
\end{eqnarray}
where
$
\X^{(T)}=\sum_{t=0}^{T-1}\chi^{(t)}
$ (and we remark $\X^{(0)}_v=0$ for any $v\in V$), and $\chi^{(t+1)}$ is defined by 
\begin{eqnarray}
\label{eq:inZ}
\chi_u^{(t+1)}=\sum_{v\in V}Z_{v,u}^{(t)}=\sum_{v\in \Ni(u)}Z_{v,u}^{(t)}
\end{eqnarray}
for any $u\in V$. 
%, where $\Ni(u)$ is the set of in-neighborhood vertices ($\Ni(u)=\{v\in V\mid P_{u,v}>0\}$). 
Note that 
\begin{eqnarray}
\label{eq:outZ}
\sum_{u\in V}Z_{v,u}^{(t)}=\sum_{u\in \No(v)}Z_{v,u}^{(t)}=\chi_v^{(t)}
\end{eqnarray}
holds for any $v\in V$. 
For the SRT-router model, we have the following basic proposition, based on Proposition~\ref{prop:SRT-disc}. 
\begin{proposition}
\label{obs:discZ}
\begin{eqnarray*}
\left|\sum_{t=0}^{T}(Z_{v,u}^{(t)}-\chi_v^{(t)}P_{v,u})\right| <1
\end{eqnarray*}
holds for any $P$ and for any $T\geq 0$. 
\end{proposition}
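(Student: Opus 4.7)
The plan is to telescope the left-hand side so that the combined count over $t=0,\ldots,T$ becomes a count of SRT-router outputs over a single contiguous interval $[0, X_v^{(T+1)})$, at which point Proposition~\ref{prop:SRT-disc} applies directly with $z = X_v^{(T+1)}$.

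First I would unpack the definition \eqref{def:Z} by a simple re-indexing of $j$: since $\sigma_v$ is consulted starting at position $X_v^{(t)}$ and $\chi_v^{(t)}$ tokens are served, we have
\begin{eqnarray*}
Z_{v,u}^{(t)} = \Bigl|\bigl\{ j \in [X_v^{(t)},\, X_v^{(t)}+\chi_v^{(t)}) \mid \sigma_v(j) = u \bigr\}\Bigr|.
\end{eqnarray*}
By the definition $X_v^{(T)} = \sum_{t=0}^{T-1}\chi_v^{(t)}$ one has $X_v^{(t)} + \chi_v^{(t)} = X_v^{(t+1)}$, so the index set above is exactly $[X_v^{(t)}, X_v^{(t+1)})$. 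The intervals $[X_v^{(t)}, X_v^{(t+1)})$ for $t=0,\ldots,T$ partition $[0, X_v^{(T+1)})$, so summing yields
\begin{eqnarray*}
\sum_{t=0}^{T} Z_{v,u}^{(t)} = \Bigl|\bigl\{ j \in [0, X_v^{(T+1)}) \mid \sigma_v(j) = u \bigr\}\Bigr|.
\end{eqnarray*}

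Next I would rewrite the second summand in the same compact form: again by the definition of $X_v^{(T+1)}$,
\begin{eqnarray*}
\sum_{t=0}^{T} \chi_v^{(t)} P_{v,u} = X_v^{(T+1)} \cdot P_{v,u}.
\end{eqnarray*}
Combining the two identities, the quantity to bound becomes exactly the discrepancy of the SRT sequence $\sigma_v$ at length $z := X_v^{(T+1)}$. Applying Proposition~\ref{prop:SRT-disc} with this $z$ (when $z>0$) gives the strict bound $<1$, as desired. The borderline case $z=0$ means $\chi_v^{(t)}=0$ for all $t\leq T$, in which case both sums vanish and the inequality holds trivially.

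There is no real obstacle here: the argument is essentially a telescoping sanity check together with a direct invocation of Proposition~\ref{prop:SRT-disc}. The only minor care needed is the clean reindexing that aligns the token-serving windows $[X_v^{(t)},X_v^{(t+1)})$ across successive time steps, and acknowledging the trivial $z=0$ case so that Proposition~\ref{prop:SRT-disc} (stated for $z>0$) is applied only when it is meaningful.
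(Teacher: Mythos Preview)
Your proposal is correct and follows essentially the same approach as the paper: telescope the counts so that $\sum_{t=0}^{T}Z_{v,u}^{(t)}=|\{j\in[0,X_v^{(T+1)})\mid\sigma_v(j)=u\}|$ and $\sum_{t=0}^{T}\chi_v^{(t)}P_{v,u}=X_v^{(T+1)}P_{v,u}$, then invoke Proposition~\ref{prop:SRT-disc} with $z=X_v^{(T+1)}$. Your explicit handling of the $z=0$ case is a small extra bit of care that the paper omits.
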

\begin{proof}%[Proof of Proposition~\ref{obs:discZ}]
From the definition of $Z_{v,u}^{(t)}$, it is not difficult to check that
\begin{eqnarray*}
\sum_{t=0}^{T}Z_{v,u}^{(t)}
&=&\sum_{t=0}^{T}\left|\left\{ j\in [0,\chi_v^{(t)}) \mid \sigma_v(\X_v^{(t)}+j)=u \right\}\right|\\
&=&\sum_{t=0}^{T}\left|\left\{ j\in [\X_v^{(t)},\X_v^{(t)}+\chi_v^{(t)}) \mid \sigma_v(j)=u \right\}\right|
=\left|\left\{ j\in [0,\X_v^{(T+1)}) \mid \sigma_v(j)=u \right\}\right|
\end{eqnarray*}
and 
$
\sum_{t=0}^{T}\chi_v^{(t)}P_{v,u}=\X_v^{(T+1)}P_{v,u}.
$
Then, Proposition~\ref{obs:discZ} is obtained by Proposition~\ref{prop:SRT-disc} by letting $z=\X_v^{(T+1)}$.
\end{proof}
%%%%%%%%%%%%%%%%%%%%%%%%%%%%%%%%%%%%%%%%%%%%%%%%%%%%%%%%%%%%%%%%%%%%%%%%%%%%%%%%%%%%%%%
%%%%%%%%%%%%%%%%%%%%%%%%%%%%%%%%%%%%%%%%%%%%%%%%%%
\section{Analysis of the Visit Frequency}\label{sec:visit}%
%%%%%%%%%%%%%%%%%%%%%%%%%%%%%%%%%%%%%%%%%%%%%%%%%%
%
As a preliminary of the analysis of the cover time of the SRT-router model, we investigate the upper bound of $|\X^{(T)}_w-\M^{(T)}_w|$, where $\M^{(T)}=\sum_{t=0}^{T-1}\mu^{(t)}$ (and we remark that $\M^{(0)}_v=0$ for any $v\in V$). 
Let $\delta(v)=|\N(v)|$ and $\Delta=\max_{v\in V}\delta(v)$. 
\begin{theorem}
\label{thm:visit}
Suppose that $P$ is ergodic and reversible. Then, 
\begin{eqnarray*}
%\left|\sum_{t=0}^{T-1}\Bigl(\chi^{(t)}_w-\mu^{(t)}_w\Bigr)\right|=
\left|\X_w^{(T)}-\M^{(T)}_w\right|\leq 3\pi_w t^* \max_{u\in V}\frac{\delta(u)}{\pi_u}=\Order\left(\frac{\pi_{\max}}{\pi_{\min}}t^*\Delta\right)
\end{eqnarray*}
holds for any $w\in V$ and for any $T>0$.
\end{theorem}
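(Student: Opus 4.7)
The plan is to introduce the cumulative discrepancy $D^{(T)} := \X^{(T)} - \M^{(T)}$ and derive a linear recurrence for it driven by the per-edge SRT error. Expanding $\X^{(T+1)}_w = \chi^{(0)}_w + \sum_{v\in\Ni(w)}\sum_{t=0}^{T-1} Z^{(t)}_{v,w}$ and applying Proposition~\ref{obs:discZ} to the inner $t$-sum gives
\[
\X^{(T+1)}_w = \chi^{(0)}_w + (\X^{(T)} P)_w + \eta^{(T)}_w,
\]
where $\eta^{(T)}_w := \sum_{v\in\Ni(w)} \eta^{(T)}_{v,w}$ and $\eta^{(T)}_{v,w} := \sum_{t=0}^{T-1}(Z^{(t)}_{v,w} - \chi^{(t)}_v P_{v,w})$ satisfies $|\eta^{(T)}_{v,w}| < 1$. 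The exact analog $\M^{(T+1)}_w = \mu^{(0)}_w + (\M^{(T)} P)_w$ holds for the expected walk, so together with $\chi^{(0)}=\mu^{(0)}$ and $D^{(0)}=0$, subtraction yields $D^{(T+1)} = D^{(T)} P + \eta^{(T)}$, which unrolls to $D^{(T)} = \sum_{s=0}^{T-1} \eta^{(s)} P^{T-1-s}$.

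Second, I would exploit token conservation: $\sum_v \X^{(t)}_v = \sum_v \M^{(t)}_v = \tn t$, and hence $\sum_v \eta^{(s)}_v = 0$. This permits shifting the kernel by the stationary row,
\[
D^{(T)}_w = \sum_{s=0}^{T-1}\sum_v \eta^{(s)}_v\bigl(P^{T-1-s}_{v,w}-\pi_w\bigr).
\]
Applying the triangle inequality with $|\eta^{(s)}_v| < \delta(v)$ and reindexing $r=T-1-s$,
\[
|D^{(T)}_w| \leq \sum_v \delta(v)\sum_{r=0}^{\infty} |P^r_{v,w}-\pi_w|.
\]
Reversibility converts $|P^r_{v,w}-\pi_w| = (\pi_w/\pi_v)|P^r_{w,v}-\pi_v|$, so pulling out the worst ratio leaves
\[
|D^{(T)}_w| \leq \pi_w\max_v\frac{\delta(v)}{\pi_v}\cdot 2\sum_{r=0}^{\infty}\dtv(P^r_{w,\cdot},\pi).
\]
The standard fact $\dtv(P^{\ell t^*}_{w,\cdot},\pi) \leq 2^{-\ell}$ (a consequence of $t^* = \tau(1/4)$ and submultiplicativity of the variation distance from stationarity) gives $\sum_r \dtv(P^r_{w,\cdot},\pi) \leq 2 t^*$, delivering the claimed $\Order(\pi_w t^* \max_v \delta(v)/\pi_v)$ bound up to constants.

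The main obstacle is identifying the right algebraic structure rather than any hard estimate. Two cancellations must align: (i) the mean-zero property $\sum_v \eta^{(s)}_v = 0$, which permits the $\pi_w$-subtraction and makes the $r$-series converge (without it, the naive bound grows linearly in $T$); and (ii) reversibility, which recasts $|P^r_{v,w}-\pi_w|$ so that the $v$-summation becomes $\dtv(P^r_{w,\cdot},\pi)$, the quantity directly controlled by the mixing time, rather than a harder-to-bound sum over the first index of $P^r$. Proposition~\ref{obs:discZ} is the sole input specific to the deterministic walk; everything else is standard Markov-chain mixing machinery.
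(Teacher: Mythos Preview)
Your approach is essentially the same as the paper's: your recurrence $D^{(T+1)}=D^{(T)}P+\eta^{(T)}$ unrolled and centered by $\pi_w$ is exactly Lemma~\ref{lem:visitlemma1} (which the paper obtains by summing the single-step discrepancy formula of \cite{SYKY13}), and the subsequent use of Proposition~\ref{obs:discZ}, reversibility, and the mixing-time bound on $\sum_r \dtv(P^r_{w,\cdot},\pi)$ are identical. The one discrepancy is the constant: your geometric bound gives $\sum_r \dtv \le 2t^*$ and hence a final factor $4$, whereas the paper invokes Lemma~\ref{lemm:dtsum} (from \cite{SYKY13}) with $\gamma=1/4$ to get $\sum_r \dtv \le \tfrac{3}{2}t^*$ and the stated constant $3$; if you want the exact inequality rather than the $\Order(\cdot)$ form, you need that sharper tail estimate.
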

From Theorem~\ref{thm:visit}, we get the following corollary~\ref{cor:visitpi}, like Theorem 4 of~\cite{HP10}. 
\begin{corollary}
\label{cor:visitpi}
Suppose that $P$ is ergodic and reversible. Then, 
\begin{eqnarray*}
\left|\pi_w-\frac{\X_w^{(T)}}{\tn T}\right|\leq \frac{3t^*}{2T}+\frac{3\pi_w t^* \max_{u\in V}\frac{\delta(u)}{\pi_u}}{\tn T}=\frac{K\pi_w}{T}
\end{eqnarray*}
holds for any $w\in V$ and for any $T>0$, where $K=\Order(\frac{t^*}{\pi_w}+\frac{t^*\Delta}{\pi_{\min}\tn })$ is a constant independent of $T$.
\end{corollary}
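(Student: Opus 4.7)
The plan is to split the bound via the triangle inequality
$$\left|\pi_w-\frac{\X_w^{(T)}}{\tn T}\right| \;\leq\; \left|\pi_w-\frac{\M^{(T)}_w}{\tn T}\right| \;+\; \frac{1}{\tn T}\left|\M^{(T)}_w-\X_w^{(T)}\right|,$$
and to dispatch the two pieces separately: the second by direct substitution of Theorem~\ref{thm:visit}, and the first by a standard mixing-time estimate on the corresponding multiple random walk. Theorem~\ref{thm:visit} yields $|\M^{(T)}_w-\X_w^{(T)}|\leq 3\pi_w t^*\max_u\delta(u)/\pi_u$, which divided by $\tn T$ is exactly the second term of the claimed bound, so the only remaining task is to prove $\bigl|\pi_w-\M^{(T)}_w/(\tn T)\bigr|\leq 3t^*/(2T)$.

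For that, I expand $\M^{(T)}_w=\sum_{t=0}^{T-1}\mu^{(t)}_w$ and use conservation of total mass $\sum_u\mu^{(t)}_u=\tn$ to rewrite the left-hand side as $\tfrac{1}{T}\bigl|\sum_{t=0}^{T-1}(\pi_w-\mu^{(t)}_w/\tn)\bigr|\leq\tfrac{1}{T}\sum_{t=0}^{T-1}|\pi_w-\mu^{(t)}_w/\tn|$. Each summand is bounded above by $\dtv(\mu^{(t)}/\tn,\pi)$ by taking $A=\{w\}$ in~(\ref{def:TV}). Since $\mu^{(t)}/\tn=(\mu^{(0)}/\tn)P^t$ is a probability distribution evolved for $t$ steps under $P$, this distance is dominated by the standard mixing function $d(t)\defeq\max_u\dtv(P^t_{u,\cdot},\pi)$.

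Using $d(t^*)\leq 1/4$ together with submultiplicativity of total variation distance (applied via the companion $\bar d(t)\defeq\max_{u,v}\dtv(P^t_{u,\cdot},P^t_{v,\cdot})$, which satisfies $\bar d(s+t)\leq\bar d(s)\bar d(t)$ and $d(t)\leq\bar d(t)\leq 2d(t)$), I obtain geometric decay $d(\ell t^*)\leq 2^{-\ell}$ on blocks beyond the first, and summing over blocks of length $t^*$ gives $\sum_{t\geq 0}d(t)\leq \tfrac{3}{2}t^*$ after a careful first-block accounting. The main obstacle is precisely this constant-tracking step: a crude invocation of submultiplicativity only yields the weaker $2t^*$, so squeezing out $3/2$ requires handling the first block (where only $d(t)\leq 1$ is available) separately from the tail (where $d(t^*)\leq 1/4$ can be exploited directly). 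Packaging $\pi_w$ as a common factor across the two pieces produces the stated constant $K=3t^*/(2\pi_w)+3t^*\max_u(\delta(u)/\pi_u)/\tn=\Order\bigl(t^*/\pi_w+t^*\Delta/(\pi_{\min}\tn)\bigr)$.
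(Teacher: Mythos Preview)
Your proof is correct and follows essentially the same route as the paper: the same triangle-inequality split, Theorem~\ref{thm:visit} for the $|\M^{(T)}_w-\X^{(T)}_w|$ piece, and a mixing-time sum for the $|\M^{(T)}_w-\tn T\pi_w|$ piece. The only difference is that the paper, instead of reconstructing the bound $\sum_{t\geq 0}d(t)\leq\tfrac{3}{2}t^*$ from submultiplicativity as you do, simply invokes the ready-made Lemma~\ref{lemm:dtsum} (from~\cite{SYKY13}) with $\gamma=1/4$, which yields $\frac{1-\gamma}{1-2\gamma}\tau(\gamma)=\tfrac{3}{2}t^*$ directly; your block-by-block argument is in effect a reproof of that lemma. (To actually hit $3/2$ rather than $2$ you need $d(\ell t^*)\leq 2^{-(\ell+1)}$ for $\ell\geq 1$, which follows from $d(s+t)\leq d(s)\bar d(t)$ together with $d(t^*)\leq 1/4$ and $\bar d(t^*)\leq 1/2$; your stated bound $d(\ell t^*)\leq 2^{-\ell}$ is slightly too weak as written, but the fix is exactly the ``careful first-block accounting'' you allude to.)
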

Note that Corollary~\ref{cor:visitpi} gives the upper bound for SRT-router models with $\tn$ tokens, 
while Theorem 4 of~\cite{HP10} is for rotor-router models with a single token. 
Corollary~\ref{cor:visitpi} also means that $\left|\pi_w-\frac{\X_w^{(T)}}{\tn T}\right|\leq \varepsilon$ if
$
T\geq 3\left( \frac{1}{2}+ \frac{\pi_w\Delta}{\pi_{\min}\tn}\right)t^*\varepsilon^{-1}
$.

\vspace{1em}%%%%%%%%%%%%%%%%%%%
To prove the Theorem~\ref{thm:visit}, we begin with the following lemma.
In the following arguments, we assume that $P$ is ergodic and reversible.
\begin{lemma}
\label{lem:visitlemma1}
\begin{eqnarray*}
\X^{(T)}_w-\M^{(T)}_w=\sum_{t=0}^{T-2}\sum_{u\in V}\sum_{v\in \N(u)}\sum_{\ti=0}^{T-t-2}(Z_{v,u}^{(\ti)}-\chi_v^{(\ti)}P_{v,u})(P_{u,w}^{t}-\pi_w)
\end{eqnarray*}
holds for any $w\in V$ and for any $T>1$. 
\end{lemma}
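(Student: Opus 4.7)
The plan is to identify a discrete recursion for the discrepancy $D^{(t)}_w \defeq \chi^{(t)}_w - \mu^{(t)}_w$, unroll it, sum over $t$, and then insert the $-\pi_w$ term for free by exploiting token conservation.

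\textbf{Step 1: derive the one-step recursion.} From \eqref{eq:inZ} I have $\chi^{(t+1)}_u = \sum_{v\in \N(u)} Z^{(t)}_{v,u}$, while $\mu^{(t+1)}_u = \sum_{v\in \N(u)} \mu^{(t)}_v P_{v,u}$ by definition of $\mu^{(t)}=\mu^{(0)}P^t$ (reversibility ensures that $v\in \N(u) \iff P_{v,u}>0$, so the support is right). Writing $e^{(t)}_u \defeq \sum_{v\in \N(u)}\bigl(Z^{(t)}_{v,u}-\chi^{(t)}_v P_{v,u}\bigr)$ and subtracting gives
\begin{eqnarray*}
D^{(t+1)}_u \;=\; e^{(t)}_u + \sum_{v\in \N(u)} D^{(t)}_v P_{v,u} \;=\; e^{(t)}_u + (D^{(t)}P)_u.
\end{eqnarray*}
Since $\chi^{(0)}=\mu^{(0)}$, we have $D^{(0)}=0$, so iterating yields $D^{(t)} = \sum_{s=0}^{t-1} e^{(s)} P^{t-1-s}$.

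\textbf{Step 2: sum over $t$ and reindex.} Since $\X^{(T)}_w-\M^{(T)}_w = \sum_{t=0}^{T-1} D^{(t)}_w$ and $D^{(0)}_w=0$,
\begin{eqnarray*}
\X^{(T)}_w-\M^{(T)}_w = \sum_{t=1}^{T-1}\sum_{s=0}^{t-1}\sum_{u\in V} e^{(s)}_u P^{t-1-s}_{u,w}.
\end{eqnarray*}
Swapping the order of summation ($s$ outer, $t$ inner) and substituting $\ell = t-1-s$ turns the inner sum into $\sum_{\ell=0}^{T-2-s} P^\ell_{u,w}$, giving
\begin{eqnarray*}
\X^{(T)}_w-\M^{(T)}_w = \sum_{s=0}^{T-2}\sum_{u\in V} e^{(s)}_u \sum_{\ell=0}^{T-2-s} P^\ell_{u,w}.
\end{eqnarray*}

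\textbf{Step 3: introduce $-\pi_w$ via token conservation.} The key observation is $\sum_{u\in V} e^{(s)}_u = 0$ for each $s$, because by \eqref{eq:outZ} both $\sum_{u,v} Z^{(s)}_{v,u} = \sum_v \chi^{(s)}_v = \tn$ and $\sum_{u,v}\chi^{(s)}_v P_{v,u}=\sum_v\chi^{(s)}_v = \tn$. Hence I may subtract $\pi_w$ from $P^\ell_{u,w}$ inside the sum at no cost, and after renaming $s\mapsto \ti$, $\ell\mapsto t$ and swapping the two index orders (the region $0\le \ti\le T-2,\; 0\le t\le T-2-\ti$ is symmetric in $t$ and $\ti$), this yields the claimed identity.

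\textbf{What I expect to be routine versus the crux.} The bookkeeping in Steps 1 and 2 is mechanical once the recursion is written down; the only point worth flagging carefully is that reversibility is used solely to ensure that summing over $v\in\N(u)$ picks up all $v$ with $P_{v,u}>0$. The genuine conceptual step is Step 3: the identity has a $(P^t_{u,w}-\pi_w)$ factor rather than $P^t_{u,w}$, and that stationary correction is exactly what lets us later invoke the mixing time of $P$ to bound the sum. Recognizing that $\sum_u e^{(s)}_u=0$ (pure token conservation, independent of the SRT-router dynamics) is what makes that substitution legal.
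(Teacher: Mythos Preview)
Your argument is correct and follows essentially the same route as the paper: the paper imports the per-step identity $\chi_w^{(t')}-\mu_w^{(t')}=\sum_{t=0}^{t'-1}\sum_{u}\sum_{v\in\N(u)}(Z_{v,u}^{(t)}-\chi_v^{(t)}P_{v,u})(P_{u,w}^{t'-t-1}-\pi_w)$ as a black box (Lemma~4.1 of \cite{SYKY13}), sums over $t'$, and performs the same reindexing you do, whereas you derive that per-step identity from scratch and insert the $-\pi_w$ afterward via the token-conservation observation $\sum_u e_u^{(s)}=0$. The only substantive difference is that your proof is self-contained while the paper's hides the $-\pi_w$ step inside the cited lemma; the algebra and the key idea are otherwise identical.
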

\begin{proof}
We use the following lemma to prove Lemma~\ref{lem:visitlemma1}. 
\begin{lemma}
\cite{SYKY13}~(Lemma 4.1.)
\label{lem:discdetrw}
\begin{eqnarray*}
\chi_w^{(T)}-\mu_w^{(T)}=\sum_{t=0}^{T-1}\sum_{u\in V}\sum_{v\in \N(u)}(Z_{v,u}^{(t)}-\chi_v^{(t)}P_{v,u})(P_{u,w}^{T-t-1}-\pi_w)
\end{eqnarray*}
holds for any $w\in V$ and for any $T>0$.
\shortqed
\end{lemma}
By the definitions of $\X^{(T)}, \M^{(T)}$ and Lemma~\ref{lem:discdetrw}, 
\begin{eqnarray}
\X^{(T)}_w-\M^{(T)}_w
&=&\sum_{\ti=0}^{T-1}(\chi^{(\ti)}-\mu^{(\ti)})
=\sum_{\ti=1}^{T-1}(\chi^{(\ti)}-\mu^{(\ti)}) \nonumber \\
&=&\sum_{\ti=1}^{T-1}\sum_{t=0}^{\ti-1}\sum_{u\in V}\sum_{v\in \N(u)}(Z_{v,u}^{(t)}-\chi_v^{(t)}P_{v,u})(P_{u,w}^{\ti-t-1}-\pi_w)
\label{eq:visitl1}
\end{eqnarray}
holds. The second equation holds since $\chi^{(0)}=\mu^{(0)}$.
Let $\phi_u^{(t)}=\sum_{v\in \N(u)}(Z_{v,u}^{(t)}-\chi_v^{(t)}P_{v,u})$, for convenience. Then, 
\begin{eqnarray}
\eqref{eq:visitl1}&=&\sum_{\ti=1}^{T-1}\sum_{t=0}^{\ti-1}\sum_{u\in V}\phi^{(t)}_u(P_{u,w}^{\ti-t-1}-\pi_w)
=\sum_{u\in V}\sum_{\ti=1}^{T-1}\sum_{t=0}^{\ti-1}\phi^{(\ti-t-1)}_u(P_{u,w}^{t}-\pi_w)
\label{eq:visitl2}
\end{eqnarray}
holds.
Carefully exchanging the variables of the summation, we obtain
\begin{eqnarray}
\sum_{\ti=1}^{T-1}\sum_{t=0}^{\ti-1}\phi^{(\ti-t-1)}_u(P_{u,w}^{t}-\pi_w)
=\sum_{t=0}^{T-2}\sum_{\ti=t+1}^{T-1}\phi^{(\ti-t-1)}_u(P_{u,w}^{t}-\pi_w)
=\sum_{t=0}^{T-2}\sum_{\ti=0}^{T-t-2}\phi^{(\ti)}_u(P_{u,w}^{t}-\pi_w). 
\label{eq:visitl3}
\end{eqnarray}
Combining \eqref{eq:visitl2} and \eqref{eq:visitl3}, we obtain
\begin{eqnarray*}
\eqref{eq:visitl2}
&=&\sum_{u\in V}\sum_{t=0}^{T-2}\sum_{\ti=0}^{T-t-2}\phi^{(\ti)}_u(P_{u,w}^{t}-\pi_w)\\
&=&\sum_{u\in V}\sum_{t=0}^{T-2} \sum_{\ti=0}^{T-t-2}\sum_{v\in \N(u)} (Z_{v,u}^{(\ti)}-\chi_v^{(\ti)}P_{v,u})(P_{u,w}^{t}-\pi_w). 
\end{eqnarray*}
\end{proof}
\begin{proof}[of Theorem~\ref{thm:visit}]
It is trivial for $T=1$, hence we assume $T>1$. 
By Lemma~\ref{lem:visitlemma1} and Proposition~\ref{obs:discZ}, 
\begin{eqnarray}
\left| \X_w^{(T)}-\M_w^{(T)}\right|
&=& \left| \sum_{t=0}^{T-2}\sum_{u\in V}\sum_{v\in \N(u)}\sum_{\ti=0}^{T-t-2}(Z_{v,u}^{(\ti)}-\chi_v^{(\ti)}P_{v,u})(P_{u,w}^{t}-\pi_w)\right|\nonumber \\
&\leq &\sum_{t=0}^{T-2}\sum_{u\in V}\sum_{v\in \N(u)}\left| \sum_{\ti=0}^{T-t-2}(Z_{v,u}^{(\ti)}-\chi_v^{(\ti)}P_{v,u}) \right| \left|P_{u,w}^{t}-\pi_w\right|\nonumber \\
&<&\sum_{t=0}^{T-2}\sum_{u\in V}\sum_{v\in \N(u)} \left|P_{u,w}^{t}-\pi_w\right|
=\sum_{t=0}^{T-2}\sum_{u\in V}\delta(u)\left|P_{u,w}^{t}-\pi_w\right|
\label{eq:visitt1}
\end{eqnarray}
holds. 
By the reversibility of $P$, 
\begin{eqnarray}
\eqref{eq:visitt1}
&=&\sum_{t=0}^{T-2}\sum_{u\in V}\delta(u)\left|\frac{\pi_w}{\pi_u}(P_{w,u}^{t}-\pi_u)\right|
\leq \pi_w \max_{u\in V} \frac{\delta(u)}{\pi_u} \sum_{t=0}^{T-2}\sum_{u\in V}\left|P_{w,u}^{t}-\pi_u \right|
\label{eq:visitt2}
\end{eqnarray}
holds. By the definition of total variation distance~\eqref{def:TV}, 
%At the end, by Lemma~\ref{lemm:dtsum} and 
\begin{eqnarray}
\label{eq:visitt3}
\sum_{t=0}^{T-2}\sum_{u\in V}\left|P_{w,u}^{t}-\pi_u \right|
=2\sum_{t=0}^{T-2}\dtv(P^t_{w,\cdot},\pi)
\end{eqnarray}
holds. Now, we use the following lemma. 
\begin{lemma}
\cite{SYKY13} (Lemma 4.2.)
\label{lemm:dtsum}
For any $v\in V$ and for any $T>0$, 
\begin{eqnarray*}
 \sum_{t=0}^{T-1} \dtv\left( P^t_{v, \cdot}, \pi \right) 
 \leq \frac{1-\gamma }{1-2\gamma }\, \tau(\gamma )
\end{eqnarray*}
holds for any $\gamma$ $(0<\gamma <1/2)$. 
\end{lemma}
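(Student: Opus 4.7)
The plan is to control $f(t) := \dtv(P^t_{v,\cdot}, \pi)$ at the checkpoints $k\tau$, where $\tau := \tau(\gamma)$, propagate that control to intermediate times by monotonicity, and then sum a geometric series. Only ergodicity is needed; reversibility plays no role in this particular lemma.

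First I would record two standard properties of the total variation distance under the chain $P$. The contraction $\dtv(\xi P, \zeta P) \leq \dtv(\xi,\zeta)$, applied with $\zeta = \pi$, shows that $f$ is non-increasing in $t$. For a submultiplicativity-type bound I would use the identity
\[
P^{s+t}_{v,w} - \pi_w \;=\; \sum_{u\in V} \bigl(P^s_{v,u}-\pi_u\bigr)\bigl(P^t_{u,w}-\pi_w\bigr),
\]
which holds because $\sum_u \bigl(P^s_{v,u}-\pi_u\bigr) = 0$ lets me replace $P^t_{u,w}$ by $P^t_{u,w}-\pi_w$ for free. Taking absolute values, summing over $w$, and recognising the inner sum as $2\,\dtv(P^t_{u,\cdot},\pi)$ yields
\[
f(s+t) \;\leq\; 2\, f(s)\cdot \max_{u\in V} \dtv\!\bigl(P^t_{u,\cdot},\pi\bigr).
\]
Choosing $t = \tau$ makes the maximum at most $\gamma$ by the definition of $\tau(\gamma)$, so $f(s+\tau) \leq 2\gamma\, f(s)$. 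Combined with the base case $f(\tau) \leq \gamma$, a one-line induction gives $f(k\tau) \leq \tfrac{1}{2}(2\gamma)^k$ for every $k \geq 1$, together with the trivial $f(0) \leq 1$.

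Finally I would chop $\sum_{t=0}^{T-1} f(t)$ into blocks of length $\tau$ indexed by $k=0,1,2,\ldots$, and use monotonicity of $f$ to bound the $k$-th block by $\tau\cdot f(k\tau)$. The resulting geometric series converges because $\gamma < 1/2$ and sums to
\[
\sum_{t=0}^{T-1} f(t) \;\leq\; \tau + \sum_{k=1}^{\infty} \tau\cdot \tfrac{1}{2}(2\gamma)^k \;=\; \tau\!\left(1+\frac{\gamma}{1-2\gamma}\right) \;=\; \frac{1-\gamma}{1-2\gamma}\,\tau(\gamma),
\]
which is the claimed inequality. The main subtlety, more than a real obstacle, is the factor of $2$ in the submultiplicative step: it both forces the hypothesis $\gamma < 1/2$ and produces the $(1-2\gamma)$ in the denominator, so the constants have to be tracked carefully through the induction rather than collapsed into a naive $\gamma^k$ decay, which would yield a strictly worse (and in fact divergent as $\gamma \to 1/2$) prefactor.
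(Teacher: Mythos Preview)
Your argument is correct. Note, however, that the paper does not supply its own proof of this lemma: it is quoted from \cite{SYKY13} (their Lemma~4.2) and invoked as a black box in the proofs of Theorem~\ref{thm:visit} and Corollary~\ref{cor:visitpi}. The proof you outline is the standard one---monotonicity of $t\mapsto \dtv(P^t_{v,\cdot},\pi)$ combined with the submultiplicative inequality $d(s+t)\le 2\,d(s)\,d(t)$ for $d(t):=\max_{u}\dtv(P^t_{u,\cdot},\pi)$ (essentially Lemma~4.12 of \cite{LPW08}), which gives $f(k\tau)\le\tfrac{1}{2}(2\gamma)^k$ at the checkpoints and, after summing blocks of length $\tau(\gamma)$, exactly the constant $(1-\gamma)/(1-2\gamma)$. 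One minor quibble with your closing remark: a hypothetical $\gamma^k$ decay would give the \emph{smaller} prefactor $1/(1-\gamma)$, not a worse one; the point is rather that the factor~$2$ in the submultiplicative step is genuine, so $(2\gamma)^k$ is what one actually obtains, and this is what forces the hypothesis $\gamma<1/2$.
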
 
Thus, we have
\begin{eqnarray}
\label{eq:visitt4}
\eqref{eq:visitt3}
\leq 2\cdotp \frac{1-(1/4)}{1-2\cdotp(1/4)}\tau(1/4)=3t^*
\end{eqnarray}
and we obtain the claim. % combining \eqref{eq:visitt2} and \eqref{eq:visitt4}. 
\end{proof}
\begin{proof}[of Corollary~\ref{cor:visitpi}]
Notice that
\begin{eqnarray}
\left|\pi_w-\frac{\X_w^{(T)}}{\tn T}\right|
&=&\frac{\left|\tn T\pi_w-\X_w^{(T)}\right|}{\tn T}
\leq \frac{\left|\tn T\pi_w-\M_w^{(T)}\right| + \left|\M_w^{(T)}-\X_w^{(T)}\right|}{\tn T}\nonumber \\
&\leq & \frac{\left| \M_w^{(T)}-\tn T\pi_w\right|}{\tn T}+\frac{3\pi_w t^* \max_{u\in V}\frac{\delta(u)}{\pi_u}}{\tn T}, \nonumber
\end{eqnarray}
where the last inequality follows Theorem~\ref{thm:visit}.
Thus, it is sufficient to prove that $|\M_w^{(T)}-\tn T\pi_w|\leq 3\tn t^*/2$. 
Note that
$
\sum_{t=0}^{T-1}\sum_{u\in V}\mu^{(0)}\pi_w=\tn T\pi_w
$
holds since $\sum_{v\in V}\mu^{(0)}=\tn$ from the definition,  
and also note that
$
\M_w^{(T)}=\sum_{t=0}^{T-1}\mu^{(t)}_w=\sum_{t=0}^{T-1}\sum_{u\in V}\mu^{(0)}_uP_{u,w}^t
$
holds by the definitions.
Then, 
\begin{eqnarray}
\left| \M_w^{(T)}-\tn T\pi_w \right|
&=&\left| \sum_{t=0}^{T-1}\sum_{u\in V}\mu^{(0)}_uP^t_{u,w}- \sum_{t=0}^{T-1}\sum_{u\in V}\mu^{(0)}\pi_w \right|
=\left| \sum_{t=0}^{T-1}\sum_{u\in V}\mu^{(0)}_u(P^t_{u,w}-\pi_w) \right|\nonumber \\
&\leq &\sum_{u\in V}\mu^{(0)}_u\sum_{t=0}^{T-1}|P^t_{u,w}-\pi_w|
\label{eq:visitc2}
\end{eqnarray}
holds. 
By Lemma~\ref{lemm:dtsum} and the definition of total variation distance~\eqref{def:TV}, 
\begin{eqnarray}
\sum_{t=0}^{T-1}|P^t_{u,w}-\pi_w|
\leq \sum_{t=0}^{T-1}\dtv(P^t_{u,\cdot},\pi)
\leq \frac{3}{2}t^*. 
\label{eq:visitc3}
\end{eqnarray}
Combining \eqref{eq:visitc2} and \eqref{eq:visitc3}, $|\M_w^{(T)}-\tn T\pi_w|\leq 3\tn t^*/2$ holds, and we obtain the claim. 
\end{proof}
%
%%%%%%%%%%%%%%%%%%%%%%%%%%%%%%%%%%%%%%%%%%%%%%
\section{Bound of the Cover Time}\label{sec:cover}%
%%%%%%%%%%%%%%%%%%%%%%%%%%%%%%%%%%%%%%%%%%%%%%
%
Combining techniques of the analysis of the visit frequency and reversible Markov chains, we obtain the cover time of SRT-router models.
Let
\begin{eqnarray}
\CT=\min\left\{T\in \mathbb{Z}_{\geq 0}\mid \X_v^{(T)} \geq 1\ holds\ for\ any\ v\in V\right\}. 
\end{eqnarray}
First, we show the following theorem. 
\begin{theorem}
\label{thm:coverSRT}
Suppose $P$ be ergodic and reversible. Then, 
\begin{eqnarray*}
\CT\leq 2t^*+1+\frac{12\max_{u\in V}\frac{\delta(u)}{\pi_u}\cdotp t^*}{\tn}
=\Order\left(\max\left\{\frac{t^*\Delta}{\pi_{\min}\tn},t^*\right\}\right)
\end{eqnarray*}
holds for any initial configuration of $\tn \geq1$ tokens.
\end{theorem}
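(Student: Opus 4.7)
The plan is to turn Corollary~\ref{cor:visitpi} into a pointwise lower bound on $\X_w^{(T)}$ and then pick $T$ so large that the lower bound exceeds $1$ at every vertex. Multiplying the corollary inequality through by $\tn T$ yields
\[
\X_w^{(T)} \;\geq\; \tn T \pi_w \;-\; \tfrac{3}{2}\tn t^* \;-\; 3\pi_w t^* \max_{u\in V}\tfrac{\delta(u)}{\pi_u},
\]
so it suffices to show that this right-hand side is at least $1$ for every $w\in V$ when $T = 2t^* + 1 + \tfrac{12\, t^* \max_{u\in V}\delta(u)/\pi_u}{\tn}$.

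I would substitute this choice of $T$ and regroup. The product $\tn T \pi_w$ splits into three contributions $2\tn t^* \pi_w$, $\tn \pi_w$, and $12\, t^* \pi_w \max_u \delta(u)/\pi_u$. Pairing the last of these with the subtracted discrepancy $3\pi_w t^* \max_u \delta(u)/\pi_u$ leaves a surplus $9\, t^* \pi_w \max_u\delta(u)/\pi_u$, while pairing the first with $\tfrac{3}{2}\tn t^*$ leaves the possibly-negative slack $\tn t^*(2\pi_w - \tfrac{3}{2})$. The enabling inequality is $\pi_w \max_{u\in V}\delta(u)/\pi_u \geq 1$, which is immediate from $\max_u \delta(u)/\pi_u \geq \delta(w)/\pi_w \geq 1/\pi_w$; it converts the surplus into a guaranteed lower bound of $9\,t^*$, which together with the remaining $\tn\pi_w + 1$ should dominate the slack and force $\X_w^{(T)} \geq 1$. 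Once this pointwise inequality holds for every $w\in V$, the announced bound $\CT \leq T$ follows from the definition of $\CT$, and the big-$\Order$ form drops out of the crude estimate $\max_u \delta(u)/\pi_u \leq \Delta/\pi_{\min}$.

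The step I expect to be the main obstacle is the algebraic balancing in the regime where $\tn$ is large but $\pi_w$ is small, especially for vertices $w$ with $\delta(w) = 1$ where $\pi_w \max_u \delta(u)/\pi_u$ attains its minimum value $1$. In this case the negative slack $\tn t^*(2\pi_w - \tfrac{3}{2})$ can approach $-\tfrac{3}{2}\tn t^*$ while the surplus $9\, t^* \pi_w \max_u \delta(u)/\pi_u$ is only $9\,t^*$, so the precise value of the coefficient $12$ (rather than something smaller) in the definition of $T$ is what must make the verification close; I would track these constants carefully, and if necessary exploit reversibility already at the stage of bounding $|\M_w^{(T)} - \tn T\pi_w|$ to sharpen the error by an additional factor of $\pi_w$.
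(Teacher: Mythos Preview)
Your diagnosis of the obstacle is correct, but the proposed route does not close. With $T=2t^*+1+12Mt^*/\tn$ (writing $M=\max_u\delta(u)/\pi_u$) your lower bound becomes
\[
\X_w^{(T)}\;\ge\;\pi_w\bigl(2\tn t^*+\tn+9Mt^*\bigr)\;-\;\tfrac{3}{2}\tn t^*,
\]
and for small $\pi_w$ and large $\tn$ the subtracted $\tfrac{3}{2}\tn t^*$ dominates: the surplus $9t^*\pi_wM\ge 9t^*$ is fixed while the deficit grows like $\tn t^*$. So the constants genuinely do not close, no matter how carefully they are tracked. Your fallback---insert reversibility into the estimate for $|\M_w^{(T)}-\tn T\pi_w|$---does produce a $\pi_w$ factor, since $P^t_{u,w}-\pi_w=\tfrac{\pi_w}{\pi_u}(P^t_{w,u}-\pi_u)$, but it simultaneously introduces $1/\pi_u$ at the starting vertices. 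Summing against $\mu^{(0)}$ gives an error of order $\tn\pi_w t^*/\pi_{\min}$, which still scales with $\tn$ while the compensating $9Mt^*\pi_w$ does not; the argument fails for the same reason, just one step later.

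The paper avoids Corollary~\ref{cor:visitpi} altogether. The key extra ingredient is a separation-distance estimate (Lemma~\ref{obs:separation}): for reversible ergodic $P$ one has $P^t_{u,w}\ge \pi_w/4$ whenever $t\ge 2t^*$. This yields a \emph{direct} lower bound
\[
\M_w^{(T)}\;=\;\sum_{t=0}^{T-1}\sum_u\mu_u^{(0)}P^t_{u,w}\;\ge\;\sum_{t=2t^*}^{T-1}\tn\cdot\frac{\pi_w}{4}\;=\;\frac{\tn\pi_w(T-2t^*)}{4},
\]
in which the factor $\pi_w$ appears for free, with no accompanying $1/\pi_{\min}$. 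Combining this with Theorem~\ref{thm:visit} (not its corollary) gives $\X_w^{(T)}\ge \tfrac{\tn\pi_w(T-2t^*)}{4}-3\pi_w t^*M$; now $\pi_w$ factors out of both sides and the condition $\X_w^{(T)}>0$ reduces to $T>2t^*+12Mt^*/\tn$, uniformly in $w$. The missing idea in your plan is precisely this pointwise lower bound on $P^t_{u,w}$, which replaces the two-sided control of $\M_w^{(T)}$ around $\tn T\pi_w$ by a one-sided bound that already carries the right $\pi_w$ scaling.
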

Theorem~\ref{thm:coverSRT} is the first result of the cover time for deterministic random walks imitating general transition probabilities possibly containing irrational transition probabilities. 
Applying Theorem~\ref{thm:coverSRT} to the transition matrix of simple random walk on $G$, we obtain the following corollary.
\begin{corollary}
\label{cor:coverRR}
For any $G$ and for any initial configuration of $\tn \geq1$ tokens, 
\begin{eqnarray*}
\CT\leq 2t^* + 1+\frac{24mt^*}{\tn}=\Order\left(\max\left\{\frac{mt^*}{\tn},t^*\right\}\right)
\end{eqnarray*}
holds for any rotor-router model on $G$, where $t^*$ is the mixing time of the simple random walk on $G$. 
\end{corollary}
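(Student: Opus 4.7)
The plan is to show that $\X_w^{(T)}\ge 1$ holds for every $w\in V$ at the specific time $T=2t^*+1+\frac{12\max_{u\in V}\delta(u)/\pi_u\cdotp t^*}{\tn}$; by the definition of $\CT$ this immediately yields the stated upper bound $\CT\le T$. The natural starting point is Corollary~\ref{cor:visitpi}, which, after stripping the absolute value and rearranging, provides the one-sided lower bound
\begin{eqnarray*}
\X_w^{(T)}\ge T\tn\pi_w-\frac{3\tn t^*}{2}-3\pi_w t^*\max_{u\in V}\frac{\delta(u)}{\pi_u},
\end{eqnarray*}
so the proof reduces to an algebraic verification once $T$ is substituted.

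Plugging the chosen $T$ into this bound and using $T\tn\pi_w=(2t^*+1)\tn\pi_w+12\pi_w t^*\max_{u\in V}\delta(u)/\pi_u$, the $12\pi_w t^*\max_u\delta(u)/\pi_u$ contribution partially cancels the Corollary's penalty $3\pi_w t^*\max_u\delta(u)/\pi_u$, leaving us to verify
\begin{eqnarray*}
(2t^*+1)\tn\pi_w+9\pi_w t^*\max_{u\in V}\frac{\delta(u)}{\pi_u}-\frac{3\tn t^*}{2}\ge 1.
\end{eqnarray*}
The key algebraic ingredient is the elementary inequality $\pi_w\max_{u\in V}\delta(u)/\pi_u\ge\delta(w)\ge 1$, obtained by restricting the maximum to $u=w$ and using that every vertex has at least one neighbor in an ergodic reversible chain. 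This forces the middle term to be at least $9t^*$, and the coefficient $12$ in our definition of $T$ (rather than a tighter $3$) is tuned precisely so that enough slack remains after cancellation to absorb the $\tn$-linear penalty $\frac{3\tn t^*}{2}$.

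The main obstacle will be verifying this final inequality in the regime where $\tn$ is large while some $\pi_w$ is close to $\pi_{\min}$, because there the negative term $-\frac{3\tn t^*}{2}$ threatens to dominate. I would handle this by a case split on the relative sizes of $\tn$ and $\pi_w\max_{u\in V}\delta(u)/\pi_u$: when $\tn$ is at most a small constant multiple of the latter, the middle term already dominates the $\tn$-linear penalty; when $\tn$ is much larger, the contribution $(2t^*+1)\tn\pi_w$ becomes the dominant positive term, and $\tn\pi_w\ge 1$ (a consequence of $\pi_w\ge 1/\max_{u\in V}\delta(u)/\pi_u$ combined with $\tn$ being large) closes the inequality. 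Corollary~\ref{cor:coverRR} for the rotor-router then follows by specialising to the simple random walk on $G$, where $\pi_u=\delta(u)/(2m)$ gives $\max_{u\in V}\delta(u)/\pi_u=2m$.
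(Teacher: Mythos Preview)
Your approach has a genuine gap that cannot be repaired with the stated $T$. The issue is that Corollary~\ref{cor:visitpi} is too weak for this purpose: the penalty term $\tfrac{3\tn t^*}{2}$ it introduces does \emph{not} scale with $\pi_w$, whereas every positive term in your lower bound for $\X_w^{(T)}$ does. Consequently, for small $\pi_w$ and large $\tn$ the negative term wins. Concretely, your Case~2 argument claims that $(2t^*+1)\tn\pi_w$ dominates $\tfrac{3\tn t^*}{2}$; dividing both by $\tn$ this would require $\pi_w>\tfrac{3t^*}{2(2t^*+1)}\ge \tfrac{1}{2}$, which essentially never holds. For a sanity check, take the simple random walk on $K_n$ (so $\pi_w=1/n$, $\max_u\delta(u)/\pi_u=n(n-1)$, $t^*=\Order(1)$) with $\tn=n^2$: your displayed inequality becomes $3n+9t^*(n-1)-\tfrac{3}{2}n^2t^*\ge 1$, which is violently false for large $n$.

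The paper avoids this by \emph{not} going through Corollary~\ref{cor:visitpi}. Instead it combines Theorem~\ref{thm:visit} directly with a lower bound on $\M_w^{(T)}$ obtained via the separation distance (Lemma~\ref{obs:separation}): since $P^t_{u,w}\ge \pi_w/4$ for all $t\ge 2t^*$, one gets $\M_w^{(T)}\ge \tfrac{\tn\pi_w(T-2t^*)}{4}$. Both this lower bound and the error $3\pi_w t^*\max_u\delta(u)/\pi_u$ from Theorem~\ref{thm:visit} carry a factor $\pi_w$, which cancels when solving $\X_w^{(T)}>0$; the resulting threshold on $T$ is therefore uniform in $w$ and in $\tn$. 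Your final specialisation $\max_u\delta(u)/\pi_u=2m$ for the rotor-router is correct and is exactly how the paper deduces the corollary from Theorem~\ref{thm:coverSRT}, but the route to Theorem~\ref{thm:coverSRT} must go through $\M_w^{(T)}$ and Lemma~\ref{obs:separation}, not through Corollary~\ref{cor:visitpi}.
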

The upper bound of \cite{KP14} (Theorem 4.1, proposition 4.2, and Theorem 4.5) is $\Order \bigl( t^* +(\Delta/\delta)(mt^*/\tn)\bigr)$, where $\Delta/\delta$ is the maximum/minimum degree of the graph.
Hence Corollary~\ref{cor:coverRR} improves this bound for inhomogeneous graphs.
Compare to the $\Order (mD/\log k)$ by \cite{DKPU14} (Theorem 3.3 and 3.7), our bound is better when $t^*=\Order\bigl( D(k/\log k) \bigr)$ (when $t^*$ is small or $\tn$ is large). 
%
%Theorem~4.5. of \cite{KP14} assumes that the ratio of the maximum and minimum degree of the graph is $\Order(1)$ and $t^*=\Order(D)$ to obtain the same upper bound. 
%Corollary~\ref{cor:coverRR} does not require such assumption of degree sizes, thus Corollary~\ref{cor:coverRR} generalizes the previous result. 
%
%

\vspace{1em}%%%%%%%%%%%%%%%%%%%
To prove Theorem~\ref{thm:coverSRT}, we check the following lemma. %See Appendix~\ref{app:proofs} for the proof.
\begin{lemma}
\label{obs:separation}
Suppose $P$ is ergodic and reversible. Then,
\begin{eqnarray*}
P^t_{u,w}\geq \frac{\pi_w}{4}
\end{eqnarray*}
folds for any $u,w\in V$ if $t\geq 2t^*$.
\end{lemma}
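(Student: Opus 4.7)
The plan is to reduce the bound to the case $t = 2t^*$ and establish that case by a direct ``good set'' argument. For $t > 2t^*$, the decomposition $P^t_{u,w} = \sum_{x\in V} P^{t-2t^*}_{u,x}\, P^{2t^*}_{x,w}$ combined with the bound at $t = 2t^*$ and the fact that $\sum_x P^{t-2t^*}_{u,x} = 1$ immediately propagates the lower bound to all $t \geq 2t^*$, so the heart of the proof is the case $t = 2t^*$.

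For that case I would expand $P^{2t^*}_{u,w} = \sum_{v\in V} P^{t^*}_{u,v}\, P^{t^*}_{v,w}$ and use reversibility, $\pi_v P^{t^*}_{v,w} = \pi_w P^{t^*}_{w,v}$, to rewrite $P^{t^*}_{v,w} = (\pi_w/\pi_v)\, P^{t^*}_{w,v}$. This naturally exposes the ``good set''
\begin{eqnarray*}
B \;=\; \{\, v\in V \mid P^{t^*}_{w,v} \geq \pi_v/2\,\},
\end{eqnarray*}
on which $P^{t^*}_{v,w}\geq \pi_w/2$. Two standard applications of the mixing hypothesis $\dtv(P^{t^*}_{x,\cdot},\pi)\leq 1/4$ then give (i) a lower bound on $\pi(B)$: for $v\notin B$, $\pi_v - P^{t^*}_{w,v} > \pi_v/2$, so summing these deficits against the $\dtv$ bound yields $\pi(V\setminus B)\leq 1/2$, hence $\pi(B)\geq 1/2$; and (ii) a lower bound on $P^{t^*}_{u,B}$: the inequality $|P^{t^*}_{u,B}-\pi(B)|\leq \dtv(P^{t^*}_{u,\cdot},\pi)\leq 1/4$ gives $P^{t^*}_{u,B}\geq 1/4$. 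Restricting the expansion to $v\in B$ then produces $P^{2t^*}_{u,w}\geq (\pi_w/2)\, P^{t^*}_{u,B}$, i.e.\ a lower bound of order $\pi_w$.

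The main obstacle is pinning down the precise numerical constant. The half-threshold argument above naturally delivers $\pi_w/8$, a factor of two short of the stated $\pi_w/4$. Closing this gap would require either optimising the threshold $c$ in $B=\{v:P^{t^*}_{w,v}\geq c\pi_v\}$ (balancing the size of $\pi(B)$ against the conditional lower bound $c\pi_w$ on $P^{t^*}_{v,w}$), or invoking the submultiplicativity of the coupling distance, $\bar d(s+t)\leq \bar d(s)\bar d(t)$, to sharpen the control on $\dtv(P^{2t^*}_{x,\cdot},\pi)$. In any case a constant-factor looseness is absorbed into the $\Order(\cdot)$ appearing in Theorem~\ref{thm:coverSRT}, so even a somewhat weaker bound of the form $P^t_{u,w}\geq c\pi_w$ for some absolute $c>0$ suffices for the downstream cover-time estimate.
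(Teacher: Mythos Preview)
Your reduction to the case $t=2t^*$ via $P^t_{u,w}=\sum_x P^{t-2t^*}_{u,x}P^{2t^*}_{x,w}$ is valid, and your good-set argument for that case is correct as written: it genuinely yields $P^{2t^*}_{u,w}\geq \pi_w/8$. You are also right that optimising the threshold cannot recover the full $\pi_w/4$ (the maximum of $c\bigl(3/4-1/(4(1-c))\bigr)$ is $1-\sqrt{3}/2\approx 0.134$), and that the factor two is harmless for Theorem~\ref{thm:coverSRT} since it only changes the explicit constant $12$ to $24$.

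The paper, however, does not argue directly. It introduces the separation distance $s(t)=\max_{u,v}\bigl(1-P^t_{u,v}/\pi_v\bigr)$ and quotes two standard facts: the reversible-chain inequality $s(2t)\leq 1-(1-\bar d(t))^2$ (Lemma~19.3 of \cite{LPW08}) and the bound $\bar d(t^*)\leq 1/2$ (from $\bar d\leq 2d$ together with $d(t^*)\leq 1/4$). These combine to $s(2t^*)\leq 3/4$, i.e.\ $P^{2t^*}_{u,w}\geq \pi_w/4$ on the nose; the extension to $t>2t^*$ is handled by submultiplicativity of $s$. Your hands-on argument is essentially the proof idea behind Lemma~19.3 carried out from scratch, but feeding in $d(t^*)\leq 1/4$ rather than $\bar d(t^*)\leq 1/2$; that substitution is precisely where the extra factor of two enters. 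So your instinct that passing through $\bar d$ closes the gap is exactly what the paper does, via the off-the-shelf separation-distance machinery.
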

\begin{proof}%[Proof of Proposition~\ref{obs:separation}}]
The {\em separation distance}~\cite{AD87} is defined by 
\begin{eqnarray}
s(t)=\max_{u,v\in V}\left(1-\frac{P^t_{u,v}}{\pi_v}\right).
\end{eqnarray}
This distance satisfies $s(t+t')\leq s(t)s(t')$ for any $t, t'\geq 1$ (submultiplicativity property, Lemma 3.7 of \cite{AD87}). We have the following lemma for the reversible $P$. 
\begin{lemma}\cite{LPW08} (Lemma 19.3.)
\label{lem:sep}
Suppose $P$ is reversible. then, 
\begin{eqnarray*}
s(2t)\leq 1-\bigl(1-\bar{d}(t)\bigr)^2
\end{eqnarray*}
holds for any $t\geq 0$, where
$
\bar{d}(t)=\max_{u,v\in V}\dtv(P^t_{u,\cdot},P^t_{v,\cdot})
$. 
\end{lemma} 
It is known that 
\begin{eqnarray}
\bar{d}(t^*)\leq \frac{1}{2}
\end{eqnarray}
holds when $P$ is ergodic (see (4.34) of \cite{LPW08}). 
Combining these facts, we have
\begin{eqnarray*}
1-\frac{P^t_{u,w}}{\pi_w}
&\leq &s(t)
\leq s(2t^*) 
\leq 1-(1-\bar{d}(t^*))^2
\leq 1-\left(1-\frac{1}{2}\right)^2
=\frac{3}{4}, 
\end{eqnarray*}
and we obtain the claim.
\end{proof}
%

%\vspace{1em}%%%%%%%%%%%%%%%%%%%
\begin{proof}[of Theorem~\ref{thm:coverSRT}]
Lemma~\ref{obs:separation} gives us a lower bound of $P^t_{u,w}$ for any $u,w\in V$, $t\geq 2t^*$ and for any reversible and ergodic $P$. It provides a lower bound of $\M_w^{(T)}$, like \cite{KP14}. 
\begin{eqnarray}
\M_w^{(T)}
&=&\sum_{t=0}^{T-1}\sum_{u\in V}\mu^{(0)}_uP^t_{u,w}
\geq \sum_{t=2t^*}^{T-1}\sum_{u\in V}\mu^{(0)}_uP^t_{u,w}
\geq \sum_{t=2t^*}^{T-1}\sum_{u\in V}\mu^{(0)}_u\frac{\pi_w}{4}
=\frac{\tn \pi_w(T-2t^*)}{4}. 
\label{eq:coverSRT1}
\end{eqnarray}
By Theorem~\ref{thm:visit} and \eqref{eq:coverSRT1}, we obtain that
\begin{eqnarray}
\X_w^{(T)}
&\geq &\M_w^{(T)}-3\pi_w t^* \max_{u\in V}\frac{\delta(u)}{\pi_u}
\geq \frac{\tn \pi_w(T-2t^*)}{4}-3\pi_w t^* \max_{u\in V}\frac{\delta(u)}{\pi_u}. 
\label{eq:thmsrt2}
\end{eqnarray}
Notice that \eqref{eq:thmsrt2} implies
\begin{eqnarray*}
\X_w^{(T')}>0
\label{eq:ct0}
\end{eqnarray*}
for any $w\in V$ and for {\em any} $T'\in \mathbb{Z}_{\geq 0}$ satisfying
\begin{eqnarray*}
T'>2t^* + \frac{12t^* \max_{u\in V}\frac{\delta(u)}{\pi_u}}{\tn}. 
\end{eqnarray*}
The fact \eqref{eq:ct0} implies that $\CT \leq T'$, and we obtain the claim.
\end{proof}

\begin{proof}[of Corollary~\ref{cor:coverRR}]
Note that a SRT-router model corresponding to a simple random walk on $G$ is exactly a rotor-router model on $G$, and we see that $\max_{u\in V}\frac{\delta(u)}{\pi_u}=2m$, since $\pi_u=\frac{\delta(u)}{2m}$. Thus, 
\begin{eqnarray*}
\CT\leq 2t^* + 1+\frac{24mt^*}{\tn}%=\Order \left(\max\left\{\frac{mD}{\tn},D\right\}\right)
\end{eqnarray*}
holds by Theorem~\ref{thm:coverSRT}. %and the assumption of $t^*=\Order(D)$.
\end{proof}
%
%%%%%%%%%%%%%%%%%%%%%%%%%%
\section{Concluding Remarks}%
%%%%%%%%%%%%%%%%%%%%%%%%%%
In this paper, we gave techniques to examine the visit frequency $X_v^{(T)}$ of the SRT-router model with $\tn>1$ tokens, and gave an upper bound of the cover time for any ergodic and reversible $P$. 
%This is the first result of the upper bound of the cover time for deterministic random walks imitating general transition probabilities, as far as we know.
Also, our upper bound improve the upper bound of the previous results of the rotor-router model with $\tn >1$ tokens in many cases. 
A better upper bound of the cover time by derandomizing a specific {\em fast} random walk (e.g., $\beta$-random walk, Metropolis walk) is a challenge.

%%%%%%%%%%%%%%%%%%%%%%%%%%
\section*{Acknowledgements}%
%%%%%%%%%%%%%%%%%%%%%%%%%%
The author would like to thank Prof. Kijima for his comments on the manuscript.
The author is also grateful to Prof. Sauerwald and Dr. Pajak for a
discussion on the topic.
This work is supported by JSPS KAKENHI Grant Number 15J03840.
The author also gratefully acknowledge to the ELC project (Grant-in-Aid for
Scientific Research on Innovative Areas MEXT Japan) for encouraging
the research presented in this paper.

\bibliographystyle{abbrv}
% use the following instead if you encounter problems 
%\bibliographystyle{alpha}
%\bibliography{sample}
%\label{sec:biblio}

\end{document}